\newcommand\mvector{\boldsymbol}
\newcommand\field{\mathbb}
\newcommand\scN{{\mathscr N}}
\newcommand\C{\field{C}}
\newcommand\vvarphi{\mvector{\varphi}}
\newcommand\Dt{\frac{\mathrm{d}\phantom{t} }{\mathrm{d}\mspace{1mu}
t}}
\newcommand\Dz{\frac{\mathrm{d}\phantom{z} }{ \mathrm{d}z}}
\newcommand\Dtt{\frac{\mathrm{d}^2\phantom{t} }{\mathrm{d}t^2}}
\newcommand\Dzz{\frac{\mathrm{d}^2\phantom{z} }{\mathrm{d}z^2}}
\newcommand\rmd{\mathrm{d}}
\newcommand\scG{{\mathscr G}}
\newcommand\scT{{\mathscr T}}
\newcommand\scD{{\mathscr D}}
\newcommand\Q{\field{Q}}
\newcommand\Z{\field{Z}}
\newcommand\N{\field{N}}
\newcommand\vl{\mvector{l}}
\newcommand\rmi{\mathrm{i}\mspace{1mu}}
\newcommand\vM{\mvector{M}}
\theoremstyle{plain}
\newtheorem{theorem}{Theorem}
\newtheorem{lemma}{Lemma}
\newtheorem{remark}{Remark}
\newtheoremstyle{note}{\topsep}{\topsep}{\slshape}{}{\scshape}{}{ }{}
\theoremstyle{note}
\numberwithin{equation}{section}
\numberwithin{theorem}{section}
\numberwithin{lemma}{section}
\numberwithin{proposition}{section}
\numberwithin{corollary}{section}
\numberwithin{remark}{section}
\begin{document}

\begin{frontmatter}

\title{Non-integrability of restricted double pendula}

\author{Tomasz Stachowiak}
\ead{stachowiak@cft.edu.pl}
\address{Center for Theoretical Physics PAS, Al. Lotnikow 32/46, 02-668 Warsaw, Poland}

\author{Wojciech Szumi\'nski}
\ead{uz88szuminski@gmail.com}
\address{Institute of Physics, University of Zielona G\'ora, Licealna 9, PL-65-407, Zielona G\'ora, Poland}

\begin{abstract}
    We consider two special types of double pendula, with the motion of masses
restricted to various surfaces. In order to get quick insight into the
dynamics of the considered systems the Poincar\'e cross sections as well as bifurcation
diagrams have been  used. The numerical computations show that both
models are chaotic which suggest that they are not integrable. We give an
analytic proof of this fact checking the properties of the differential Galois
group of the system's variational equations along a particular non-equilibrium
solution.
\end{abstract}

\begin{keyword}
non-integrability; double pendulum; chaotic Hamiltonian systems; variational equations;
differential Galois group; Morales-Ramis theory.
\end{keyword}

\end{frontmatter}

\section{Introduction}
The complicated  behaviour of the simple double pendulum is well know but still
fascinating--because it is probably the simplest mechanical system exhibiting
chaos. We can find many interesting information about its behaviour on
the web~\cite{Nathan:10::}, in books~\cite{Baker:05::,Gitterman:10::} as well
as in the scientific
articles~\cite{Stachowiak:06::,Ivanov:99::,Burov:02::,Dullin:94::}. Since
such a system depends on parameters, like
masses and lengths of arms, its integrability analysis is very difficult. Until
now, there is no closed mathematical proof confirming its non-integrability.
However, in the nineties of the XX century, Morales-Ruiz and Ramis showed that 
integrability in the Liouville sense imposes a very restrictive condition for
the identity component of the differential Galois group of variational equations
obtained by the lineralization of the equations of motion along a particular
non-equilibrium solution. The main theorem of this theory states that if the
system is integrable in the Liouville sense, then the identity component of
differential Galois group of normal variational equations is Abelian. Note that
this is a necessary condition but not a sufficient one. For the
precise definition of Morales--Ramis theory and differential Galois group see
e.g. \cite{Morales:99::,Put:03::,Weil:01::}.

The weak point of the Morales--Ramis theory  is that it requires the knowledge
of a particular solution that is not an equilibrium position. Unfortunately,
for the ordinary planar double pendulum we cannot find it, while for the fully
three-dimensional double pendulum, the solutions are too simple to provide any
restrictions. Thus, in order to solve this
inconvenience we will consider some special types of double pendula restricted to
certain surfaces, for which the particular non-equilibrium solutions are known.
Thanks to this delicate modification we are able to proceed with the whole
integrability analysis of these models, so that we can move closer to the proof of
the non-integrability of the double pendulum than ever before.

\section{Model 1: simple broken pendulum \label{sec:model_1}}
The first model under consideration consists of two simple pendula such that the
first one with length $l_1$ and mass $m_1$ is attached to a pivot point and moves in
the $(x,y)$ plane. The second pendulum with $l_2$ and $m_2$ is restricted to
the plane containing $m_1$ and parallel to the $(x,z)$ plane. Looking at the
Figure~\ref{fig:m_1:geometry} we can immediately
write the Lagrange function
\begin{figure}[h]
\centering{
\resizebox{77mm}{!}{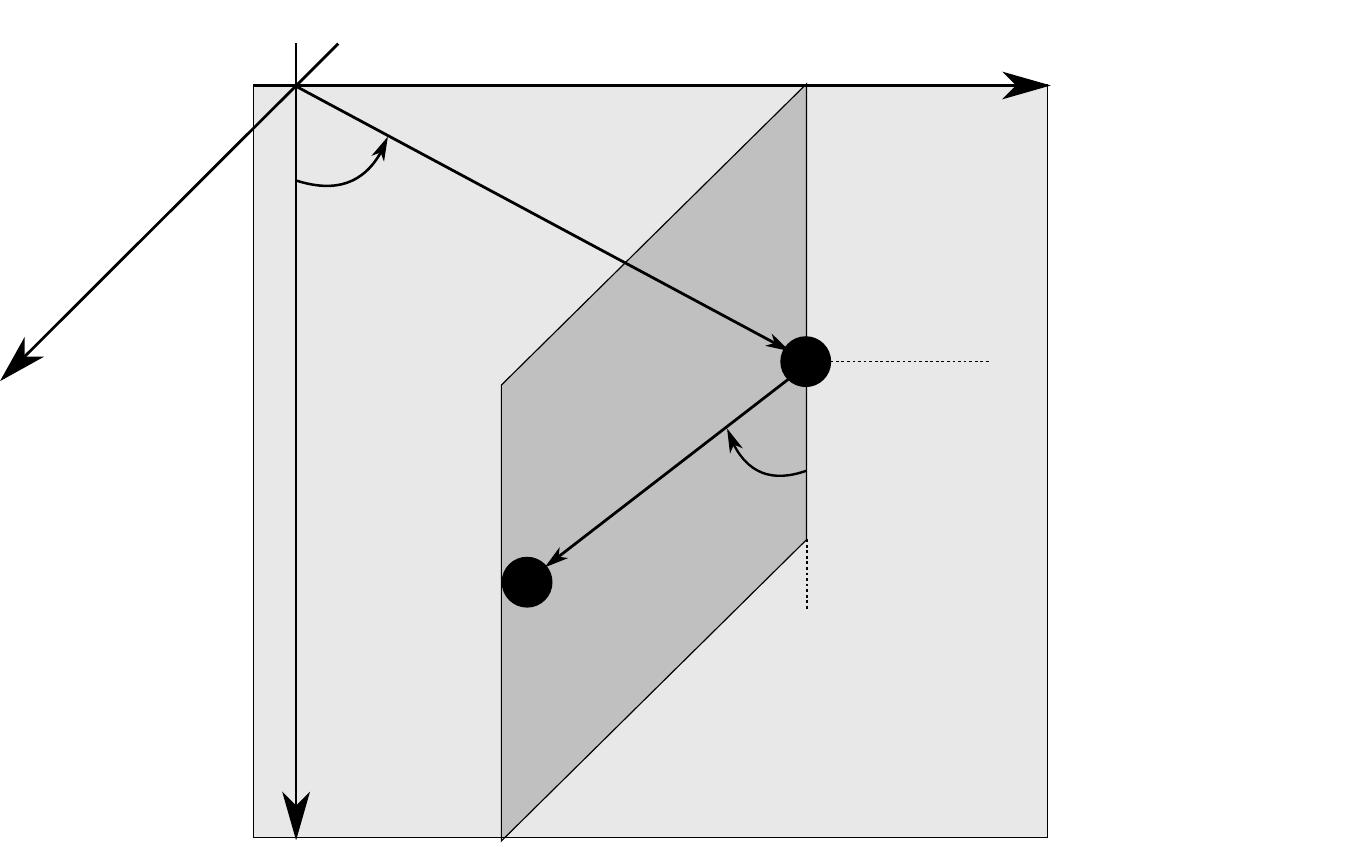}
\caption{Geometry of the first system.}
\label{fig:m_1:geometry}
}
\end{figure}

\begin{equation}
\label{eq:m_1:Lagrangian}
\begin{split}
L=&\frac{1}{2} \left(l_1^2 \left(m_1+m_2\right)
   \dot{\phi }_1^2+l_2^2 m_2 \dot{\phi }_2^2+2 l_1
   l_2 m_2 \dot{\phi }_2 \dot{\phi }_1 \sin
   \phi _1 \sin\phi
   _2\right) \\ &
   +g l_1 m_1 \cos\phi_1+g m_2 \left(l_1
   \cos \phi _1+l_2 \cos\phi
   _2\right).
   \end{split}
\end{equation}
In order to have invertible Legendre transformation we assume that the determinant
of the kinetic energy matrix does not vanish, which is valid provided that
\[
l_1l_2m_2\neq 0.
\]
Then, the Hamiltonian function is the following
\begin{equation}
\label{eq:m_1:Hamiltonian}
\begin{split}
H&=\frac{l_2^2
   m_2 p_1^2 +l_1^2 \left(m_1+m_2\right) p_2^2-2 l_1 l_2 m_2  p_1p_2 \sin \phi _1 \sin \phi _2}{2 l_1^2 l_2^2 m_2 \left(m_2
   \left(1-\sin ^2\phi _1 \sin ^2\phi _2\right)+m_1\right)}\\ &
   -g \left(l_2 m_2 \cos \phi _2+l_1 \left(m_1+m_2\right) \cos \phi
   _1\right),
\end{split}
\end{equation}
and its corresponding canonical equations are given by
\begin{equation}
\label{eq:m_1:vh}
\begin{split}
\dot \phi_1&=\frac{l_2 p_1-l_1 p_2 \sin\phi_1 \sin\phi_2}{l_1^2 l_2
   \left(m_2 \left(1-\sin^2\phi_1 \sin ^2\left(\phi_2\right)\right)+m_1\right)}, \\ \dot \phi_2&=\frac{l_2 m_2 p_1 \sin\phi_1 \sin\phi_2-l_1
   \left(m_1+m_2\right) p_2}{l_1 l_2^2 m_2 \left(m_2 \left(\sin^2\phi_1
   \sin^2\phi_2-1\right)-m_1\right)}, \\
   \dot p_1&=-\frac{m_2 p_1^2 \sin \phi _1 \sin
   ^2\phi _2 \cos \phi _1}{l_1^2 \left(m_2 \left(1-\sin
   ^2\phi _1\sin ^2\phi _2\right)+m_1\right){}^2}-\frac{\left(m_1+m_2\right) p_2^2 \sin \left(2 \phi _1\right) \sin ^2\phi
   _2}{2 l_2^2 \left(m_2 \left(1-\sin^2 \phi_1 \sin ^2\phi
   _2\right)+m_1\right){}^2}\\ &+\frac{p_1
   p_2 \sin\phi_2 \cos \left(\phi _1\right) \left(m_2 \left(\sin
   ^2\phi _1 \sin^2 \phi_2+1\right)+m_1\right)}{l_1 l_2
   \left(m_2 \left(1-\sin^2 \phi_1 \sin ^2\phi
   _2\right)+m_1\right){}^2} -g l_1 \left(m_1+m_2\right) \sin\phi_1,\\ 
\dot p_2&=-\frac{m_2 p_1^2 \sin^2 \phi_1 \sin\phi_2 \cos \phi
   _2}{l_1^2 \left(m_2 \left(1-\sin^2 \phi_1 \sin ^2\phi
   _2\right)+m_1\right){}^2}-\frac{\left(m_1+m_2\right) p_2^2 \sin ^2\phi
   _1 \sin\phi_2 \cos\phi_2}{l_2^2 \left(m_2
   \left(1-\sin^2 \phi_1 \sin ^2\phi
   _2\right)+m_1\right){}^2} \\
   &+\frac{p_1 p_2 \sin\phi_1 \cos
 \phi _2 \left(m_2 \left(\sin^2 \phi_1 \sin ^2\phi
   _2+1\right)+m_1\right)}{l_1 l_2 \left(m_2 \left(1-\sin ^2\phi
   _1 \sin ^2\phi
   _2\right)+m_1\right){}^2}-g l_2 m_2 \sin\phi_2.
\end{split}
\end{equation}

\subsection{Numerical analysis} \begin{figure}[h!]
  \centering \subfigure[$E=-8.9$]{
    \includegraphics[width=0.46\textwidth]{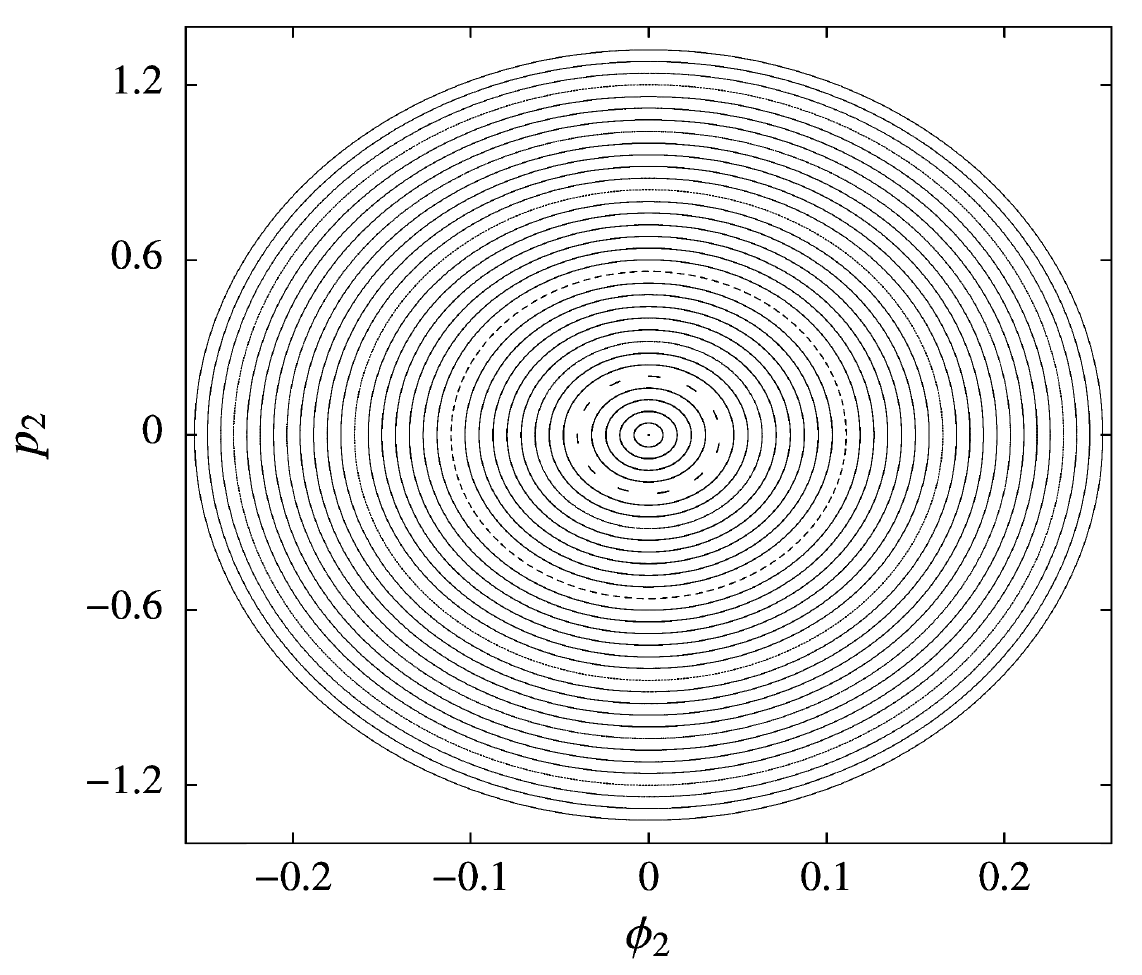}
    \label{subfig:p_m1_a}
  } \subfigure[$E=-6$]{
    \includegraphics[width=0.46\textwidth]{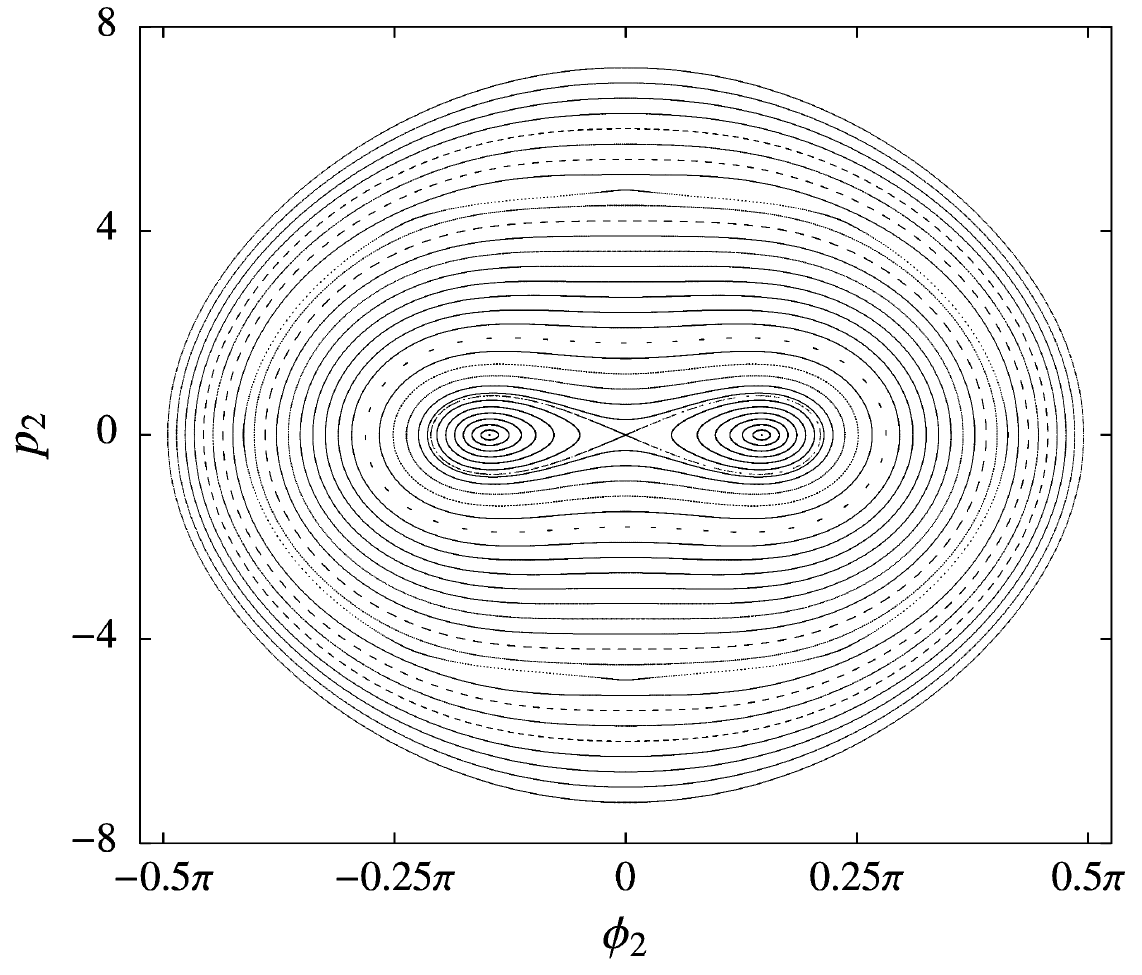}
  }
  \caption{The Poincar\'e cross sections of the first system on the surface
  $\phi_1=0$.\label{fig:p_m1_ab}}
\end{figure}
\begin{figure}[h!]
  \centering \subfigure[$E=-4$]{
    \includegraphics[width=0.46\textwidth]{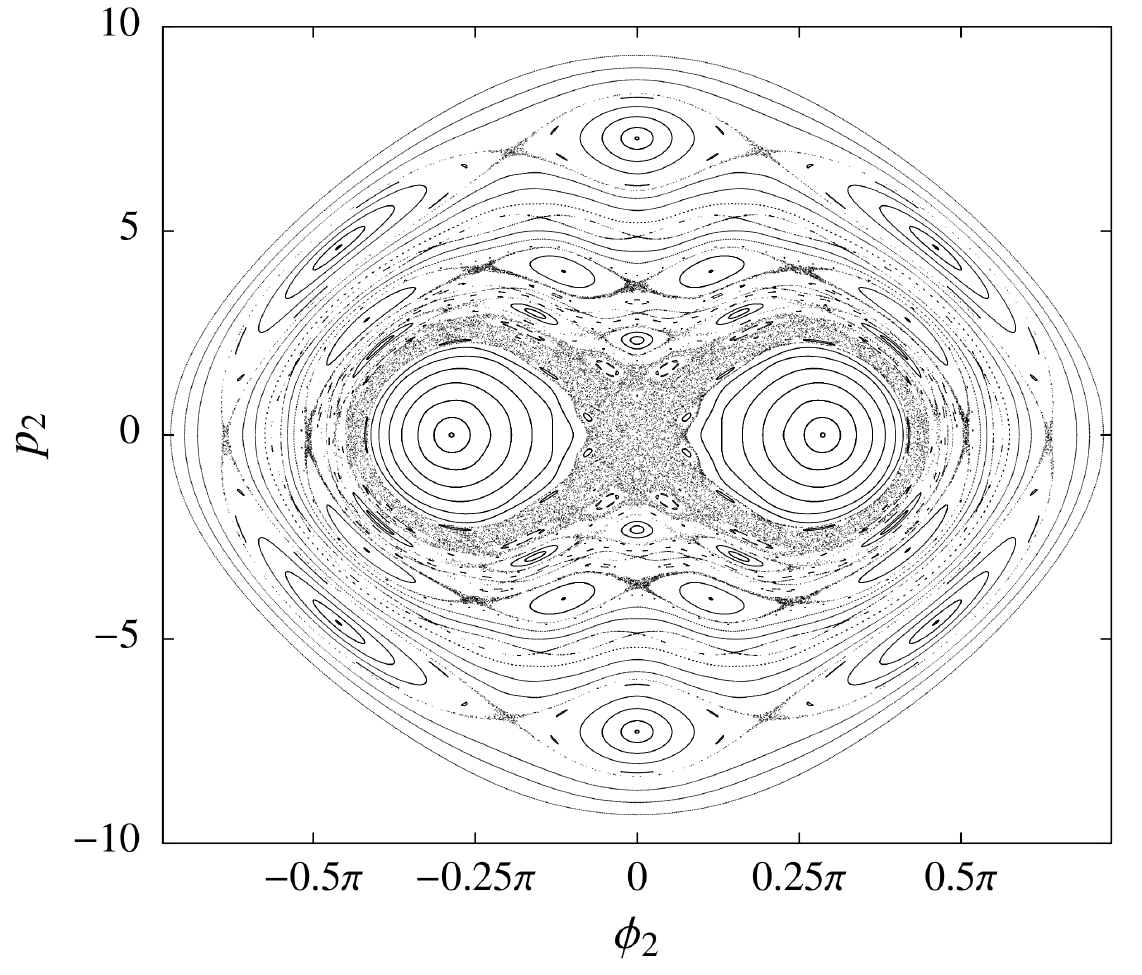}
  } \subfigure[$E=-3$]{
    \includegraphics[width=0.46\textwidth]{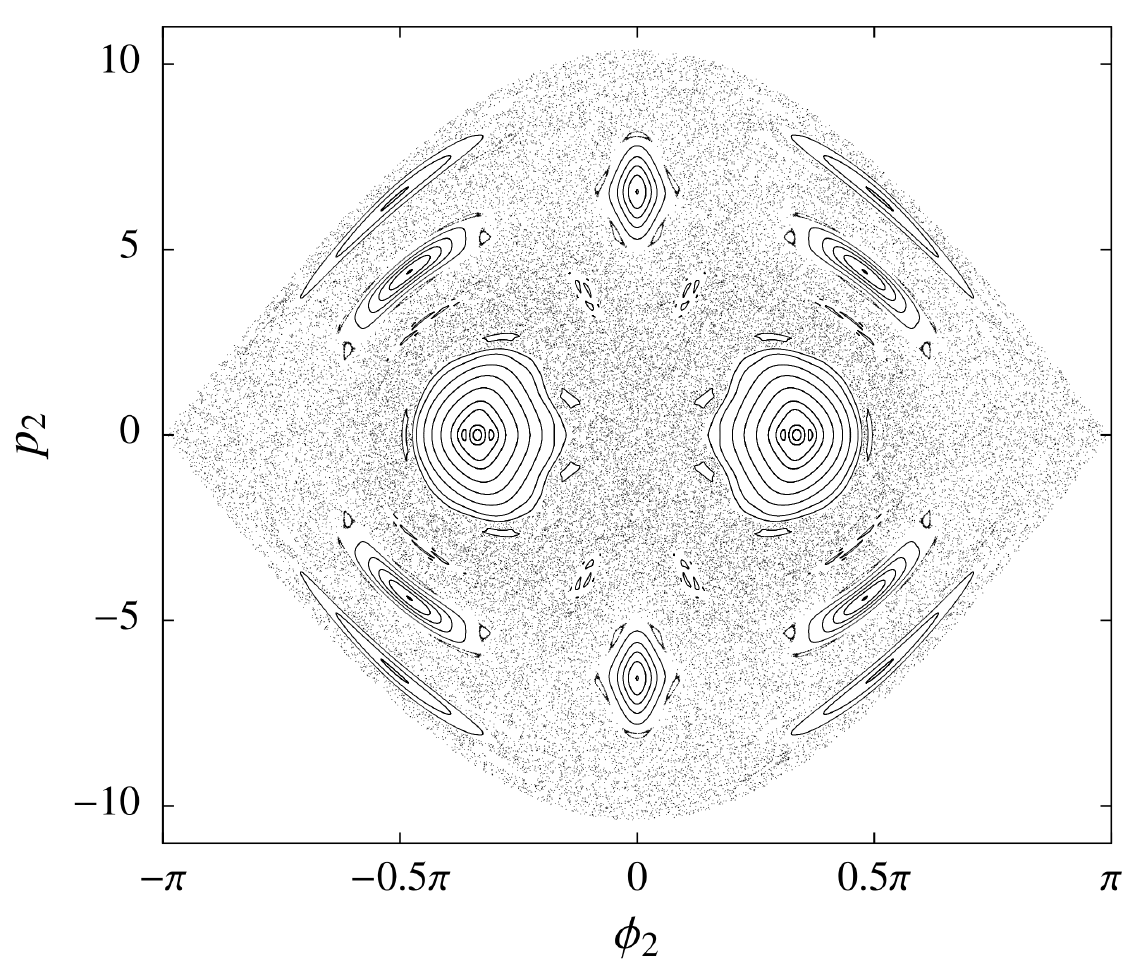}
    \label{fig:p_m1_d}
  }
  \caption{The Poincar\'e cross sections of the first system on the surface
  $\phi_1=0$.\label{fig:p_m1_cd}}
\end{figure}
\begin{figure}[h!]
\begin{center}
\includegraphics[scale=0.46]{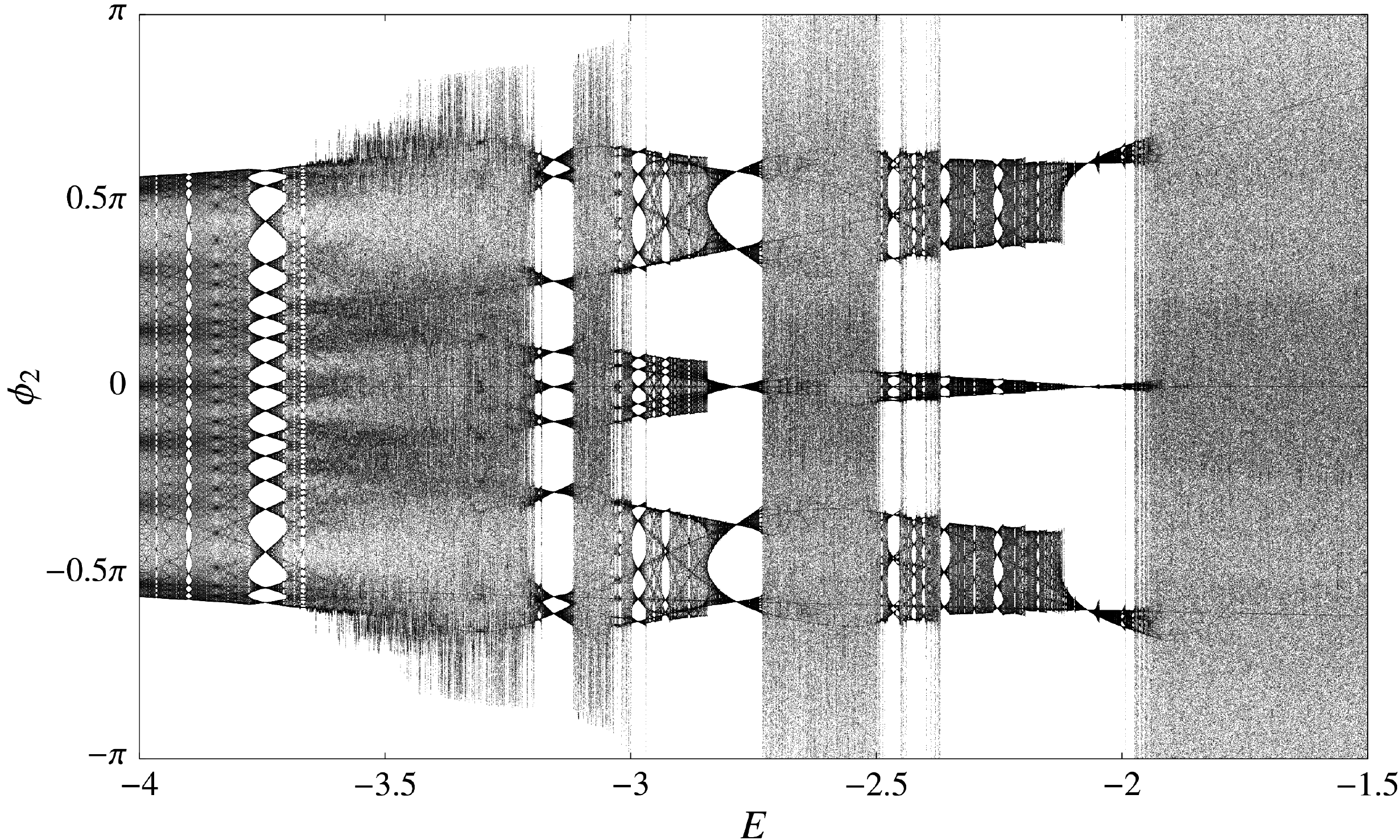}
\caption{The bifurcation diagram with initial conditions $(\phi_2,p_2)=(0,4.9)$.\label{fig:bd_m1_b}}
\end{center}
\end{figure}
As the evolution of our system takes place in a four dimensional phase space,
it is convenient to use the so-called Poincar\'e sections. These are simply the
intersections of orbits with a suitably chosen surface. This and all the other numerical
results were obtained using the Bulirsch-Stoer modified midpoint method with
Richardson extrapolation.
Figures~\ref{fig:p_m1_ab}-\ref{fig:p_m1_cd} present such sections for
increasing values of energy. They were
constructed for the following constant parameters
\begin{equation}
\label{eq:parameters}
m_1=2,\quad m_2=1,\quad l_1=2,\quad l_2=3,\quad g=1,
\end{equation}
with the surface $\phi_1=0$ and $p_1>0$, restricted to the plane
$(\phi_2,p_2)$. It is easy to verify that the global energy minimum
corresponding to the state of rest in the bottom position of the pendula ihas
the value $E_0=-9$.

As expected, at the energy level $E=-8.9$ the pendula oscillate near the
equilibrium point as shown in Figure~\ref{subfig:p_m1_a}. The
image is very regular, in the center we detect the stable particular periodic
solution that is surrounded  by invariant tori.
The situation becomes more complex when we increase energy to the value of
$E=-6$. The invariant tori located at the center become visibly deformed. Some
of them decay and we observe the bifurcations that lead to emergence of stable
periodic solutions which  are enclosed by a separatrix.  Thus, we may expect the
first sign of chaotic behaviour exactly in this region. Our suspicion is
confirmed in Figure~\ref{fig:p_m1_cd}. We can notice at first sight that
the chaotic behaviour appears in the region were the separatrix was located. For
a non-integrable Hamiltonian system, these random-looking points correspond
to the fact that the trajectories can freely wander over larger regions of the
phase space. The trajectories (in contrast to the integrable system) are no
longer confined to the surfaces of a set of nested tori, but they begin to move
outside the tori. Loosely speaking, the tori are destroyed. The
last Poincar\'e section shows the highly chaotic stage of the system. For the
energy $E=-3$ almost all of the regular orbits merge into a global chaotic
region.

Figure~\ref{fig:bd_m1_b} shows the bifurcation diagram representing the
relationship between oscillations in $\phi_2$ and energy $E$. For a given initial
condition $(\phi_2,p_2)=(0,4.9)$ we effectively construct the Poincar\'e cross
sections with cross plane $\phi_1=0$ with gradually increasing energy. The
periodic motion diverges as the energy increases and finally becomes chaotic.
We can also detect the stable ``windows'' between completely chaotic regions.

Clearly, the numerical analysis demonstrates that the system is chaotic, and the
main goal of this article is to also prove that the system is not integrable
for a wide range of parameters (almost everywhere).

\subsection{The non-integrability proof}
Below we formulate the main theorem of this subsection
\begin{theorem}
The system governed by Hamiltonian~\eqref{eq:m_1:Hamiltonian} is not integrable
in the Liouville sense in the class of meromorphic functions of coordinates and
momenta, except possibly for energy
$E\in\{-g[l_1(m_1+m_2)+l_2m_2],-g[l_1(m_1+m_2)-l_2m_2]\}$.
\end{theorem}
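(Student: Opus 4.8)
The plan is to apply the Morales--Ramis theorem, so the whole proof reduces to finding a particular non-equilibrium solution, linearizing the equations of motion along it to obtain the variational equations, extracting a scalar normal variational equation, and then showing that the identity component of its differential Galois group fails to be Abelian for almost all parameter values.

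\medskip

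First I would look for an invariant submanifold that carries a simple non-equilibrium solution. The natural candidate is the plane in which the second pendulum is confined together with the first pendulum locked in a special position. Inspecting the equations of motion~\eqref{eq:m_1:vh}, I expect that setting $\phi_1=0$ (and correspondingly $p_1=0$) is consistent: when $\sin\phi_1=0$ the coupling terms proportional to $\sin\phi_1\sin\phi_2$ drop out, the right-hand sides for $\dot\phi_1$ and $\dot p_1$ should vanish identically, and the subsystem for $(\phi_2,p_2)$ reduces to that of an ordinary planar pendulum. This gives a one-parameter family of non-equilibrium solutions $\phi_2(t)$ satisfying $\ddot\phi_2=-(g/l_2)\sin\phi_2$, whose level is fixed by the energy $E$. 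I would then verify that this indeed defines an invariant manifold $\mathcal{N}=\{\phi_1=0,\;p_1=0\}$ of the full Hamiltonian flow, so that the Morales--Ramis machinery can be applied along it.

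\medskip

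Next I would linearize the system~\eqref{eq:m_1:vh} along this particular solution. Writing variations $(\delta\phi_1,\delta\phi_2,\delta p_1,\delta p_2)$, the block structure coming from the invariance of $\mathcal{N}$ should decouple the variations tangent to $\mathcal{N}$ (which reproduce the pendulum's own variational equation and contribute nothing new) from the transverse, i.e.\ \emph{normal}, variations in $(\delta\phi_1,\delta p_1)$. The normal variational equation is what carries the obstruction. To make it tractable I would change the independent variable from $t$ to an algebraic variable adapted to the particular solution --- typically $z=\cos\phi_2$ or a related quantity built from the energy integral of the pendulum --- turning the normal variational equation into a second-order linear ODE with rational coefficients on a suitable algebraic curve (generically a Riemann sphere after the change of variable). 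The energy dependence will enter through the parameters of this ODE, and I expect the exceptional energies $E\in\{-g[l_1(m_1+m_2)\pm l_2m_2]\}$ named in the statement to correspond precisely to the degenerate cases where the particular solution collapses to an equilibrium (the pendulum at rest at its top or bottom), so that the method provides no information there.

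\medskip

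Finally I would analyze the differential Galois group of this reduced equation. After bringing it to a standard hypergeometric or Riemann $P$-type form, I would read off the local exponents at the singular points and apply the Kimura criterion (or equivalently Kovacic's algorithm) to determine when the identity component of the Galois group is non-Abelian. The strategy is to show that the exponent differences depend on the mass and length parameters in such a way that the Kimura integrability conditions are violated for all but a measure-zero (indeed, at most a discrete) set of parameters, whence the identity component is non-Abelian and, by the Morales--Ramis theorem, Liouville integrability is obstructed. I expect the \textbf{main obstacle} to be this last step: the coefficients of the normal variational equation are messy rational functions of the parameters, and verifying that the Kimura conditions fail ``almost everywhere'' requires carefully ruling out the discrete parameter families for which some exponent difference happens to become a half-integer or otherwise satisfies an integrability relation. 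Disentangling those resonant parameter loci --- and confirming they are non-generic --- is where the real work lies, whereas the setup and the change of variables are essentially routine.
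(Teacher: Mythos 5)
Your setup coincides exactly with the paper's: the same invariant manifold $\{\phi_1=p_1=0\}$, the same one-parameter pendulum family of particular solutions on $H=E$, the same closed $(\delta\phi_1,\delta p_1)$ subsystem as the normal variational equation, and the same algebraization $z=\cos\phi_2(t)$ followed by reduction to $w''=r(z)w$. Your physical reading of the excluded energies is also consistent with the paper (they are the stationary-point energies), although the technical reason for excluding them is the confluence of singular points: the pole $z=-\gamma$ collides with $z=\pm1$ precisely at those energies. The genuine problem is in your final step. After the reduction, the equation \eqref{eq:m_1:normal} has \emph{four} regular singular points, $z=\pm 1$, $z=-\gamma$ and $z=\infty$, so it is of Heun type, not hypergeometric: there is no Riemann $P$-form, and Kimura's criterion is inapplicable. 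Kimura is not ``equivalent'' to Kovacic's algorithm; it is a criterion special to the three-singularity case. Only the Kovacic branch of your plan can actually be executed.

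Moreover, your forecast of where the difficulty lies is inverted, and settling for the result you describe would prove less than the statement requires. The partial-fraction coefficients of $r(z)$ at all three finite poles turn out to be $-3/16$ \emph{independently} of $l_1,l_2,m_1,m_2$ and $E$, so every finite exponent difference equals $\tfrac12$, and the order of $r$ at infinity equals $1$; there are no resonant parameter loci to rule out at all. The actual argument is then short and uniform in the parameters: the order $1$ at infinity is odd, not greater than $2$, and less than $2$, which by Kovacic's necessary conditions eliminates both the reducible (triangular) and the finite cases; for the remaining dihedral case the auxiliary sets are $E_1=E_2=E_3=\{1,2,3\}$ and $E_\infty=\{1\}$, whence
\begin{equation*}
d(e)=\frac{1}{2}\left(e_\infty-\sum_{i=1}^{3}e_i\right)\le \frac{1}{2}(1-3)<0,
\end{equation*}
so no candidate degree exists and the group is $\operatorname{SL}(2,\C)$ for \emph{all} admissible parameters. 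This parameter-independence is essential: the theorem excludes only two energy values and no parameter values, so a proof valid merely ``for all but a discrete set of parameters,'' which is what your plan aims at, would not establish the theorem as stated.
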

\begin{proof}
The system~\eqref{eq:m_1:vh} has two known invariant manifolds
\[
\scN_1=\left\{(\phi_1,\phi_2,p_1,p_2)\in \C^4\ |\ \phi_1=p_1=0\right\},\quad \scN_2=\left\{(\phi_1,\phi_2,p_1,p_2)\in \C^4\ |\ \phi_2=p_2=0\right\}.
\]
For further analysis we chose $\scN_1$, due to considerably simpler
characteristic exponents, which will become apparent in a moment. Restricting
the right hand sides of~\eqref{eq:m_1:vh} to $\scN_1$, we obtain 
\begin{equation}
\label{eq:m_1:vh_0}
\dot \phi_1=0, \quad
\dot \phi_2=\frac{p_2}{l_2^2m_2},\quad \dot p_1=0,\quad \dot p_2=-gl_2m_2\sin\phi_2.
\end{equation}
Hence, we have a one-parameter family of particular solutions lying on $H=E$
\begin{equation}
\label{eq:m_1:H_0}
\dot \phi_2^2=\frac{2 \left[E+g l_2 m_2 \cos \phi
   _2+g l_1
   \left(m_1+m_2\right)\right]}{l_2^2 m_2}.
\end{equation}
Solving equations~\eqref{eq:m_1:vh_0} and taking into
account~\eqref{eq:m_1:H_0}, we obtain our particular solution 
$\vvarphi(t)=(0,\phi_2(t),0,p_2(t))$, where by a slight abuse of notation we use
the variable symbols for particular functions of time.
Let $\boldsymbol{v}=(\Phi_1,\Phi_2,P_1,P_2)^T$ denote the variation of
$(\phi_1,\phi_2,p_1,p_2)^T$, then the variational equations along $\vvarphi(t)$
take the form
\begin{equation}
\label{eq:m_1:var}
\begin{split}
    &\frac{\mathrm{d}\boldsymbol{v}}{\mathrm{d}t} 
    = \boldsymbol{L}\boldsymbol{v},\\
    &\boldsymbol{L}=\begin{pmatrix}
 -\frac{p_2\sin \phi _2 }{l_1 l_2
   m_3} & 0 & \frac{1}{l_1^2
   m_3} & 0 \\
 0 & 0 & 0 & \frac{1}{l_2^2 m_2} \\
 -\frac{p_2^2\sin ^2\phi _2 }{l_2^2
   m_3}-g l_1 m_3
   & 0 & \frac{p_2\sin \phi _2 }{l_1
   l_2 m_3} & 0 \\
 0 & -g l_2 m_2 \cos\phi _2 & 0 & 0
   \\
\end{pmatrix},
\end{split}
\end{equation}
where $m_3=m_1+m_2$.
Notice that equations for $\Phi_1$ and $P_1$ form a closed subsystem, called the normal
variational equations, that can be rewritten as one second order differential
equation
\begin{equation}
\label{eq:m_1:ddf1}
\ddot \Phi+\frac{p_2^2 \cos \phi
   _2+g l_2^3 m_2 \left(m_2 \cos ^2\phi
   _2+m_1\right)}{l_1 l_2^3 m_2 \left(m_1+m_2\right)}\Phi=0,\qquad \Phi=\Phi_1.
\end{equation}
Next, by means of the change of the independent variable 
\begin{equation}
\label{eq:m_1:rationalization}
t\longrightarrow z=\cos \phi_2(t),
\end{equation} and transformation of derivatives
\[
\Dt=\dot z\Dz,\quad \Dtt=\dot z^2\Dzz+\ddot z\Dz,\]
we can rewrite equation~\eqref{eq:m_1:ddf1} as
\begin{equation}
\label{eq:m_1:rational}
\Phi''+p(z)\Phi'+q(z)\Phi=0,\quad '\equiv \frac{\mathrm{d}}{\mathrm{d}z},
\end{equation}
with rational coefficients
\begin{equation}
p = \frac{z}{z^2-1}+\frac{1}{2 (z+\gamma)},\quad
q = -\frac{\beta +3 z^2+2 \gamma  z}{2 \alpha  (\beta
   +1) \left(z^2-1\right) (z+\gamma )},
\end{equation} 
where in the last steep
we used~\eqref{eq:m_1:H_0}. The explicit forms of the parameters
$(\alpha,\beta,\gamma)$ are the following
\[
\alpha=\frac{l_1}{l_2},\qquad \beta=\frac{m_1}{m_2},\qquad \gamma=\alpha+\alpha\beta+\frac{E}{l_2m_2g}.
\]
Next, let us apply the classical change of the dependent variable
\begin{equation}
\label{eq:m_1:Tchihandrius}
\Phi=w\exp\left[-\frac{1}{2}\int_{z_0}^zp(s)\rmd s\right],
\end{equation}
which transforms~\eqref{eq:m_1:rational} into its reduced form
\begin{equation}
\label{eq:m_1:normal}
w''=r(z)w,\qquad r(z)=-q(z)+\frac{1}{2}p'(z)+\frac{1}{4}p(z)^2,
\end{equation}
where the coefficient $r(z)$ is given by
\[
r(z)=-\frac{3}{16}\left[\frac{1}{(z-1)^2}+\frac{1}{(z+1)^2}+\frac{1}{(z+\gamma)^2}\right]+\frac{4 \beta +\gamma  (\alpha  \beta +\alpha +8
   z)+3 z (\alpha  \beta +\alpha +4 z)}{8 \alpha 
   \left(z^2-1\right) (z+\gamma )(\beta +1) }.
\]
In order to avoid the confluence of the singularities we assume that $\gamma\neq \pm
1$. This corresponds to the  fact that certain values of energy should be
excluded, namely
\begin{equation}
\label{eq:m_1:energy_exclude}
\{E_1,E_2\}= \left\{-g[l_1(m_1+m_2)+ l_2m_2],-g[l_1(m_1+m_2)- l_2m_2]\right\}.
\end{equation}
Notice that these two energies are the stationary point energies of the
system~\eqref{eq:m_1:Hamiltonian}. The first one, $E_1$, is the global energy
minimum corresponding to the equilibrium solution when both pendula are at
rest. The second energy, $E_2$, corresponds to the case when the first pendulum
is pointing down and the second one up, both at rest, although the equilibrium
position is not the only solution with this energy.

If there existed an additional first
integral, it would not depend on the energy value, so in particular it would
exist for all generic values of energy other than $E_1$ and $E_2$. In order to
exclude such an integral we can thus safely assume
that~\eqref{eq:m_1:energy_exclude} is always satisfied.

However, it could happen that there is a first integral, which is conserved
only on those special hypersurfaces. We cannot preclude its existence from the
below analysis, but note that such an isolated constant of motion would be of
much less practical value. Also, the physical solution corresponding to $E_1$
is just the equilibrium position; while the Poincar\'e section in
Figure~\ref{fig:p_m1_d} shows heavy chaos for energy $E_2$ indicating that not
even a local first integral exists.

It is important to note that transformations~\eqref{eq:m_1:rationalization}
and~\eqref{eq:m_1:Tchihandrius} change the differential Galois group of
equation~\eqref{eq:m_1:ddf1}. However, the key is that they do not change the
identity component of this group, see~\cite{Morales:99::}. Hence, in order to
make the non-integrability proof it is enough to show that the identity
component of the differential Galois group of~\eqref{eq:m_1:normal} is not
Abelian.
For equation~\eqref{eq:m_1:normal} the differential Galois group $\scG$ is an
algebraic subgroup of $\operatorname{SL}(2,\C)$. The following lemma describes
all possible types of $\scG$ and relates them to the forms of a solution
of~\eqref{eq:m_1:normal}, see \cite{Kovacic:86::,Morales:99::}.
\begin{lemma}
\label{lem:m_1_a}
Let $\scG$ be the differential Galois group of equation~\eqref{eq:m_1:normal}.
Then one of the four cases can occur.
\begin{enumerate}
\item $\scG$ is conjugate to a subgroup of triangular group
\[
\scT=\left\{\begin{pmatrix}
a&b\\
0 &a^{-1}
\end{pmatrix}\vert a\in \C^*, b\in \C\right\},
\]
and equation~\eqref{eq:m_1:normal} has an exponential solution  
$w=P\exp[\int \omega]$, $P\in \C[z],\ \omega \in \C(z)$.
\item $\scG$ is conjugated with a subgroup of
\[
\scD^\dagger=\left\{\begin{pmatrix}
c&0\\ 0&c^{-1}
\end{pmatrix}\vert c\in \C^*\right\} \cup 
\left\{\begin{pmatrix}
0&c\\ -c^{-1}&0
\end{pmatrix}\vert c\in \C^*\right\};
\]
in this case~\eqref{eq:m_1:normal} has a solution of the form 
$w=\exp[\int \omega]$, where $\omega$ is algebraic function of degree $2$.
\item $\scG$ is  finite and all solutions of equation~\eqref{eq:m_1:normal} are algebraic.
\item $\scG=\operatorname{SL}(2,\C)$ and equation~\eqref{eq:m_1:normal} has no Liouvillian solution.
\end{enumerate}
\end{lemma}
\begin{remark}
Let us write $r(z)\in \C$ in the form
\[
r(z)=\frac{s(z)}{t(z)},\qquad s(z),t(z)\in \C[z].
\]
The roots of $t(z)$ are the poles of $r(z)$. The order of the pole is the
multiplicity of the zero of $t$, and  the order of $r(z)$ at $\infty$ is
$\operatorname{deg}(t)-\operatorname{deg(s)}$.
\end{remark}
\begin{lemma}
\label{lem_m_1_b}
The following conditions are necessary for the respective cases given in Lemma~\ref{lem:m_1_a}.
\begin{enumerate}
\item Every pole of $r$ must have even order or else have order $1$. The order
    of $r$ at $\infty$ must be even or else grater than $2$.
\item $r$ must have at least one pole that either has odd order greater than
    $2$ or else has order $2$.
\item The order of a pole of $r$ cannot exceed $2$ and the order of $r$ at
    $\infty$ must be at least $2$. If the partial fraction expansion of $r$ is
\[
r(z)=\sum_i\frac{a_i}{(z-z_i)^2}+\sum_j\frac{b_j}{z-z_j},
\]
then $\Delta_i=\sqrt{1+4a_i}\in \Q$ for each $i$, $\sum_j b_j=0$ and if
\[
g=\sum_ia_i+\sum_j b_jd_j,
\]
then $\sqrt{1+4g}\in \Q$.
\end{enumerate}
\end{lemma}
Equation~\eqref{eq:m_1:normal} has four singularities
\[
z_{1,2}=\pm 1,\quad z_3=-\gamma,\quad z_\infty=\infty.
\] 
For $\gamma\neq \pm 1$ the singularities $\{z_1,z_2,z_3\}$ are poles of the
second order, and the degree of infinity is $1$.
Taking into account the character of singularities we deduce, according to the
Lemma~\ref{lem_m_1_b}, that the differential Galois group of reduced
equation~\eqref{eq:m_1:normal} cannot be reducible or finite. Differential
Galois group can be only dihedral or $\operatorname{SL}(2,\C)$.  In order to
check the first possibility we apply the second case of the so-called Kovacic
algorithm.  Here and below we will use the original formulation of this algorithm
given in~\cite{Kovacic:86::} as well as the notation introduced in this paper.
\begin{lemma}
The differential Galois group of equation~\eqref{eq:m_1:normal} is $\operatorname{SL}(2,\C)$.
\end{lemma}
\begin{proof}
 First, for singular points $z_i$, we calculate the auxiliary sets
\[
E_i=\left\{2+k\Delta_i\ |\ k=0,\pm 2\right\} \cap \Z, \quad i=1,2,3,
\]
where $\Delta_i$ are the difference of exponents defined by
$\Delta_i=\sqrt{1+4a_i}$, and $a_i$ are the coefficients appearing in expansion
of $r(z)$
\[
r=\sum_{i=1}^3\frac{a_i}{(z-z_i)^2}.
\] 
In our case $a_i=-3/16$. Since the order of infinity is one, the auxiliary  sets are
\begin{equation}
E_1=E_2=E_3=\{1,2,3\},\qquad E_\infty =\{1\}.
\end{equation}
Next, following the algorithm we look for elements $e=(e_\infty,e_1,e_2,e_3)$
of Cartesian product $E=E_\infty\times \prod_{i=1}^3 E_i$ such that
\begin{equation}
\label{eq:m_1_dn}
d(e):=\frac{1}{2}\left(e_\infty-\sum_{i=1}^3e_i\right)\in \N_0,
\end{equation}
where $\N_0$ is a set of non-negative integers with $0$. In is not difficult to
note that there are no elements of $E$ satisfying this condition. Thus,
according to the algorithm the second case cannot occur, which implies that
only the fourth case is possible, i.e., $G=\operatorname{SL}(2,\C)$, and
equation~\eqref{eq:m_1:normal} has no Liouvillian
solution. 
\renewcommand{\qedsymbol}{$\blacksquare$}
\end{proof}
From our short analysis we can formulate the following conclusion. Since the
differential Galois group of reduced equation~\eqref{eq:m_1:normal} is
$\operatorname{SL}(2,\C)$, the identity component of differential Galois group
of variational equations~\eqref{eq:m_1:var} is not Abelian. This implies that
the model of two pendula  restricted to $(x,y)$ and $(x,z)$ planes
respectively, is not integrable in the class of function meromorphic in
coordinates and momenta.
\end{proof}

\section{Model 2: toroidal pendulum\label{sec:model_2}}
Let us consider the the second model presented in
Figure~\ref{fig:m_2:geometry}. Like in the previous case, the first pendulum
with mass $m_1$ and length $l_1$ is attached to the fixed point and moves in
the $(x,y)$ plane. However, the second one with parameters $m_2$ and $l_2$ 
is restricted to a plane containing the first pendulum and perpendicular to the
$(x,y)$ plane. In other words it moves on a torus. The positions of the masses
$m_1$ and $m_2$, can be defined parametrically in the following way
\[
 \vl_1=l_1[\cos\phi_1,\sin\phi_1,0]^T,\quad \vl_2=\left[(l_1+l_2\cos\phi_2)\cos\phi_1,(l_1+l_2\cos\phi_2)\sin\phi_1,l_2\sin\phi_2\right]^T.
\]
\begin{figure}[h!]
\centering{
\resizebox{77mm}{!}{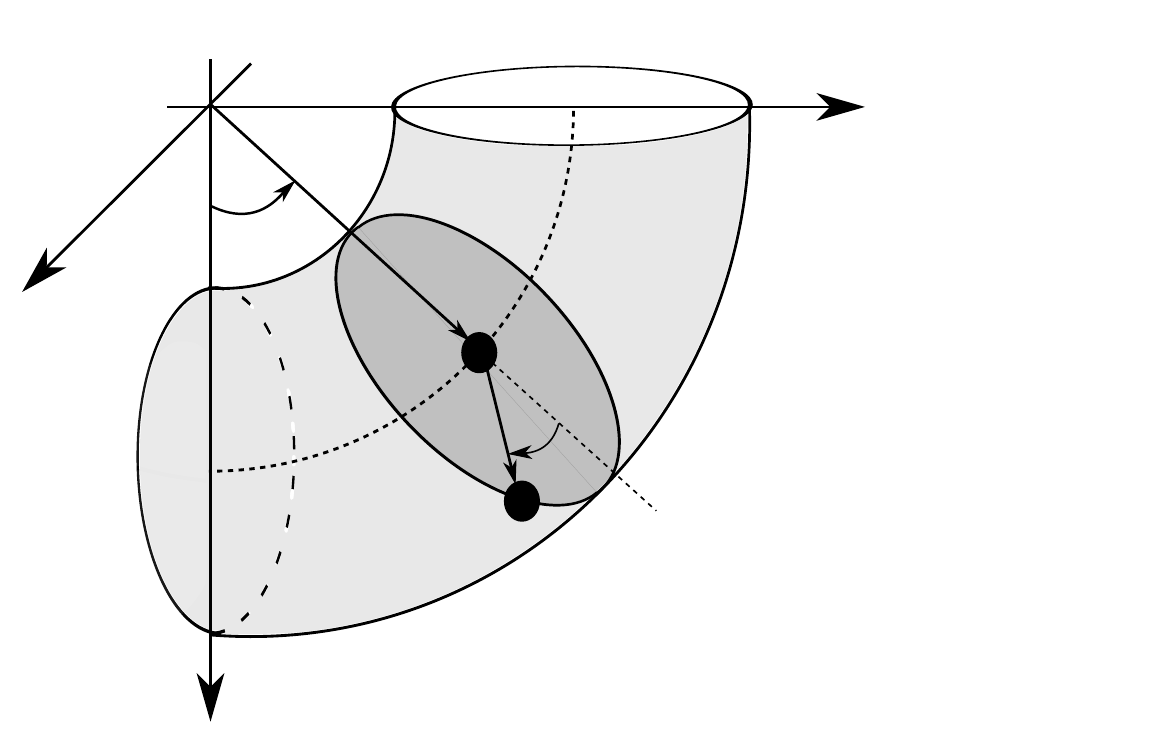}
\caption{Geometry of the second system.}
\label{fig:m_2:geometry}
}
\end{figure}
From this, we can write immediately the Lagrange function
\begin{equation}
\label{eq:m2:lagrange}
L=\frac{1}{2}\left([l_1^2m_1+m_2(l_1+l_2\cos\phi_2)^2]\dot\phi_1^2+l_2m_2\dot\phi_2^2\right)+
g[l_1(m_1+m_2)+l_2m_2\cos\phi_2]\cos\phi_1.
\end{equation}
This form is quadratic in velocities and is non-singular provided $l_2m_2\neq
0$ and then the Legendre transformation can be carried out. From now on we
assume that this condition is always fulfilled. The Hamiltonian function is
given by
\begin{equation}
\label{eq:m_2:hamiltonian}
H=\frac{1}{2}\left(\frac{p_1^2}{l_1^2m_1+m_2(l_1+l_2\cos\phi_2){}^2}+\frac{p_2^2}{l_2^2m_2}\right)-g[l_1(m_1+m_2)+l_2m_2\cos\phi_2]\cos\phi_1,
\end{equation}
and the equations of motion are as follows
\begin{equation}
\label{eq:m_2:vh}
\begin{split}
\dot \phi_1&=\frac{p_1}{l_1^2m_1+m_2(l_1+l_2\cos\phi_2){}^2},\quad \dot p_1=-g[l_1(m_1+m_2)+l_2m_2\cos\phi_2]\sin\phi_1, \\
\dot \phi_2 &=\frac{p_2}{l_2^2m_2},\quad \dot p_2=-l_2m_2\left(\frac{(l_1+l_2\cos\phi_2)p_1^2}{[l_1^2m_1+m_2(l_1+l_2\cos\phi_2)^2]^2}+g\cos\phi_1\right)\sin\phi_2.
\end{split}
\end{equation}

\subsection{Numerical analysis}

We analyse the dynamics of this model with  the same values of parameters as in
the first one, see~\eqref{eq:parameters}.
Figures~\ref{fig:p_m2_ab}-\ref{fig:p_m2_cd} present the Poincar\'e cross
sections on the $(\phi_2,p_2)$ plane with $\phi_1=0$ and $p_1>0$. As
previously,
the energy minimum corresponding to equilibria of both pendula is $E_0=-9$.
We note that for the energy  level $E=-8.9$ the first section presented
in~\ref{subfig:p_m2_a} is strictly similar to the one (see
Fig.~\ref{subfig:p_m1_a}) given in our first model. The whole
figure is filled with quasi-periodic orbits surrounding the stable periodic
particular solution. Looking at Figure~\ref{subfig:p_m2_a} we can also
detect the region of stable periodic solution related to high order resonance.
The situation becomes more complex when we increase energy to the value of
$E=-5$ and the invariant tori deform strongly.
Some of them decay giving rise to stable periodic solutions enclosed by
separatrices. Thus, as expected for higher values of energies the invariant
tori become more and more deformed and the region corresponding to chaotic
motion appears, see Figure~\ref{fig:p_m2_cd}. Figure~\ref{fig:bd_m2_b}
shows the bifurcation diagram for the coordinate $\phi_2$ as a function of $E$.
The successive sections were constructed by choosing the cross plane $\phi_1=0$
with the the initial condition $(\phi_2,p_2)=(0.61,0)$.
\begin{figure}[h!]
  \centering \subfigure[$E=-8.9$]{
    \includegraphics[width=0.46\textwidth]{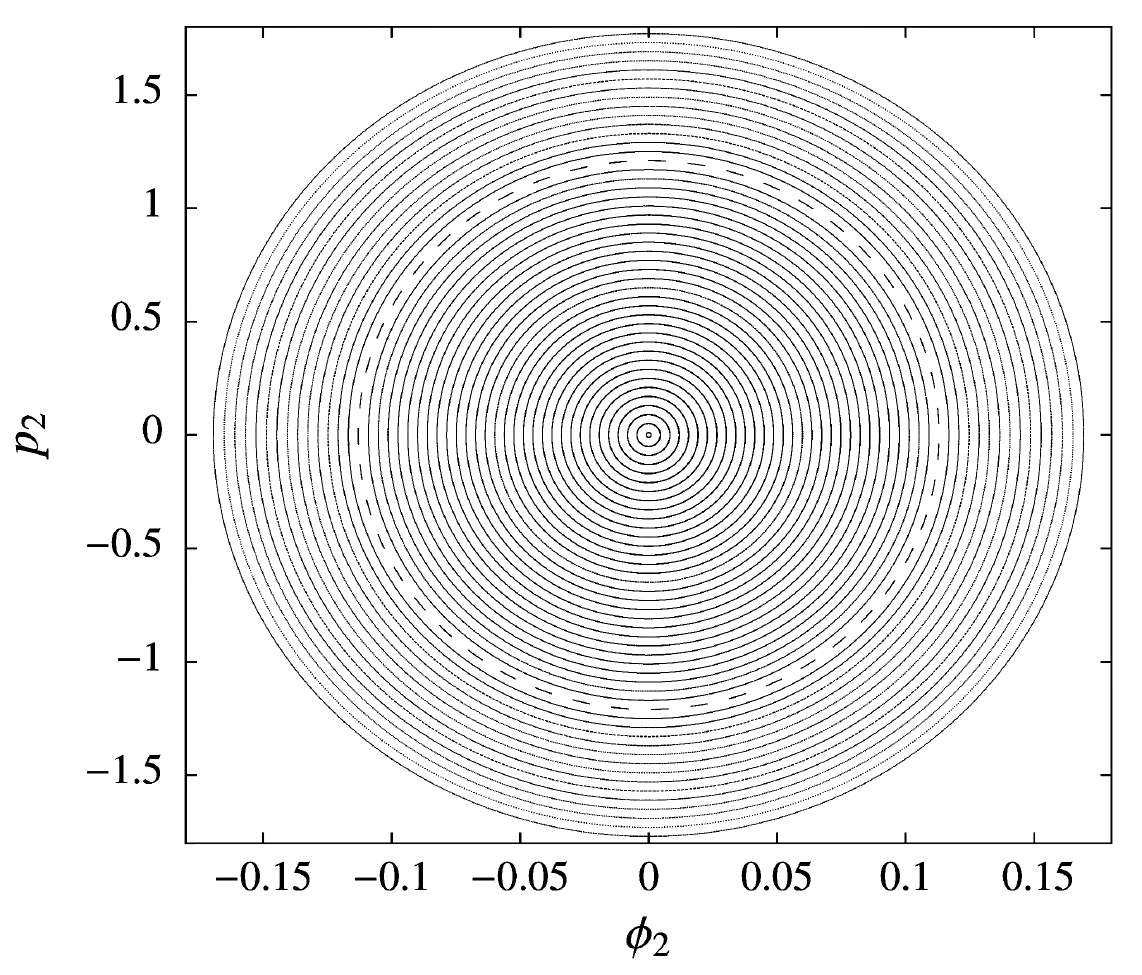}
    \label{subfig:p_m2_a}
  } \subfigure[$E=-5$]{
    \includegraphics[width=0.46\textwidth]{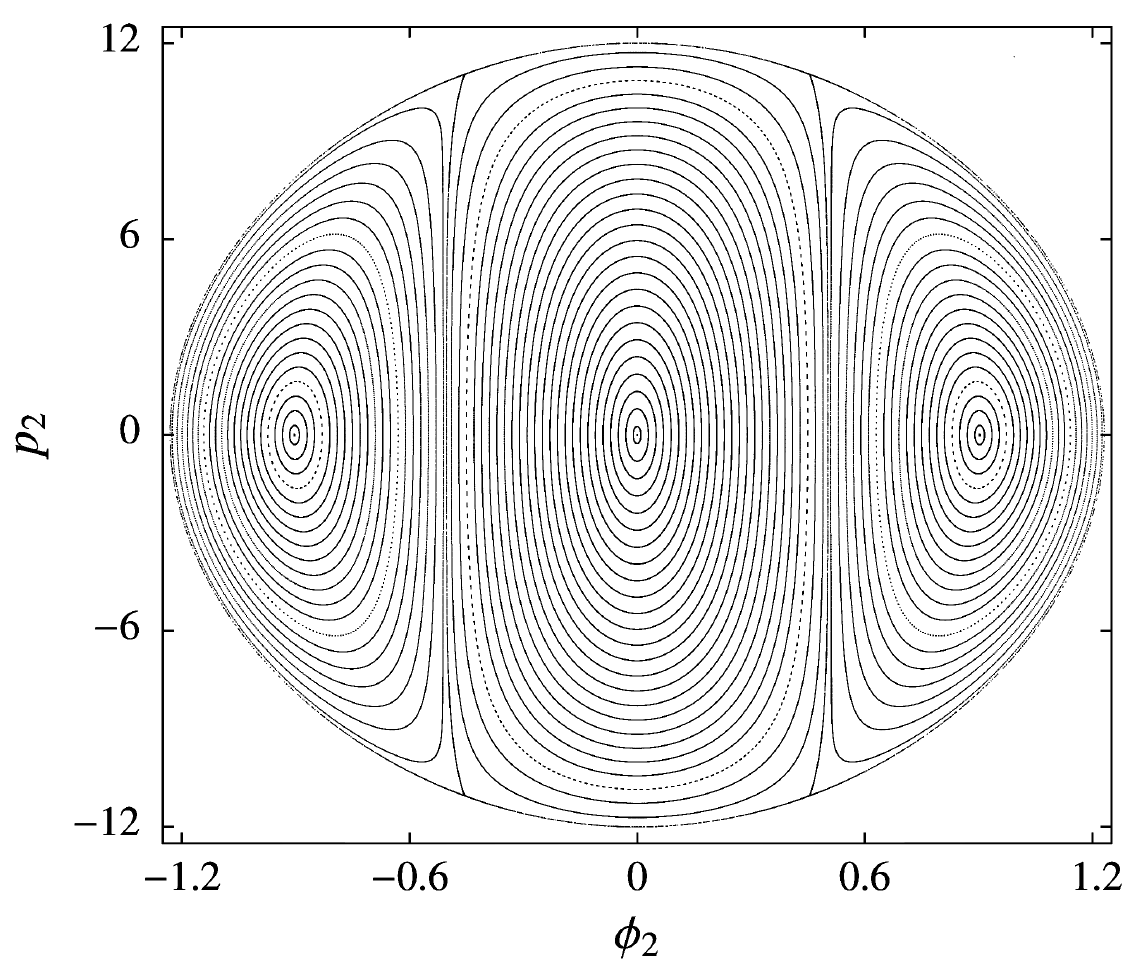}
  }
  \caption{The Poincar\'e cross sections of the second system on the surface
  $\phi_1=0$.\label{fig:p_m2_ab}}
\end{figure}
\begin{figure}[h!]
  \centering \subfigure[$E=-2.2$]{
    \includegraphics[width=0.46\textwidth]{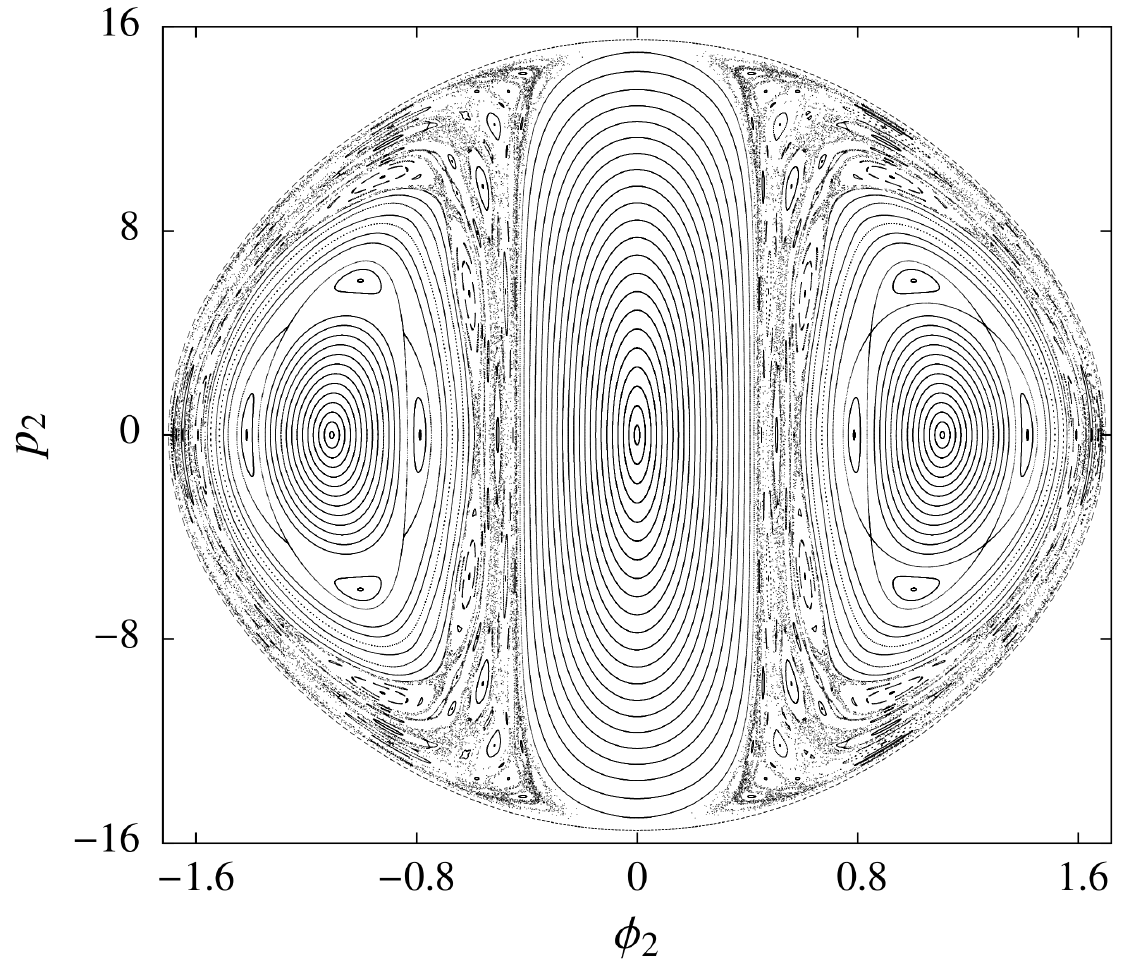}
  } \subfigure[$E=0$]{
    \includegraphics[width=0.46\textwidth]{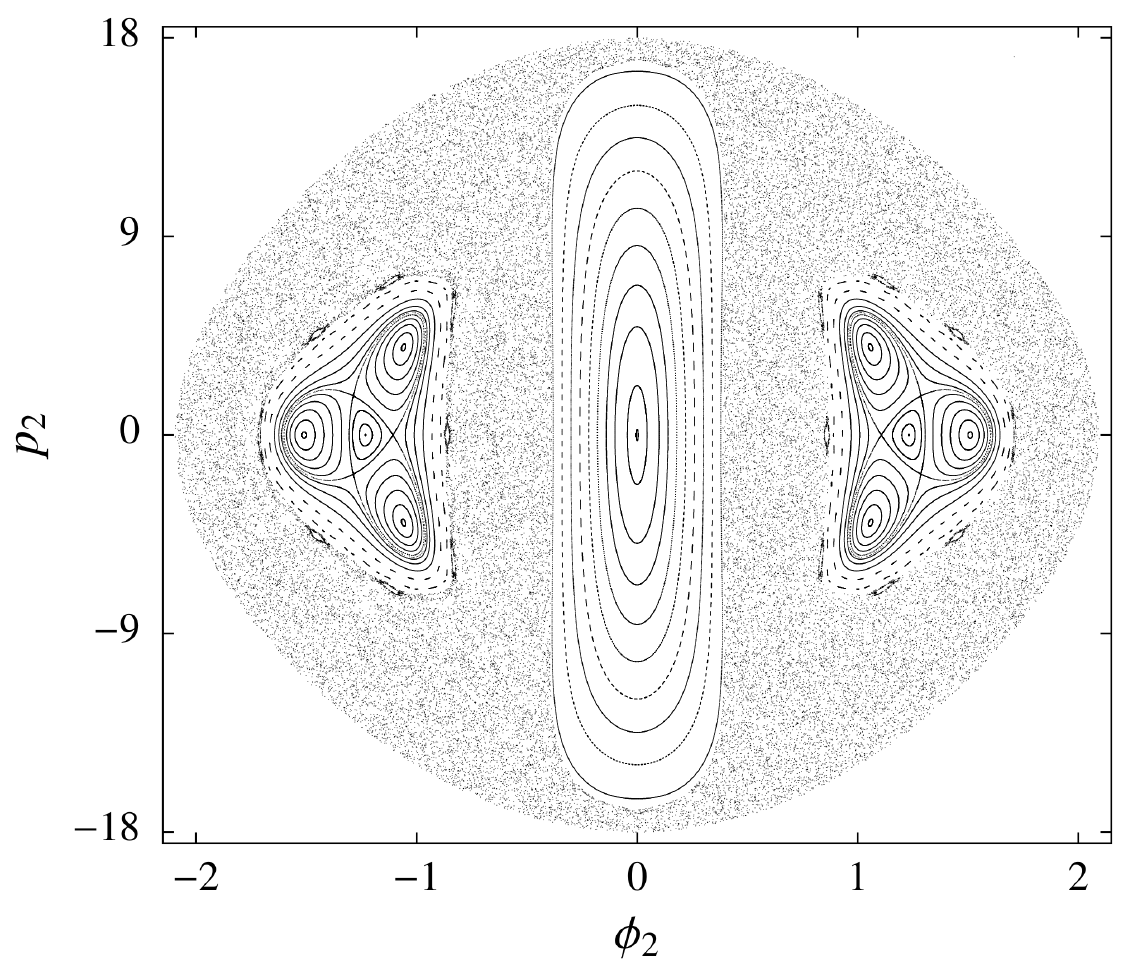}
  }
  \caption{The Poincar\'e cross sections of the second system on the surface
  $\phi_1=0$.\label{fig:p_m2_cd}}
\end{figure}
\begin{figure}[h!]
\begin{center}
\includegraphics[scale=0.46]{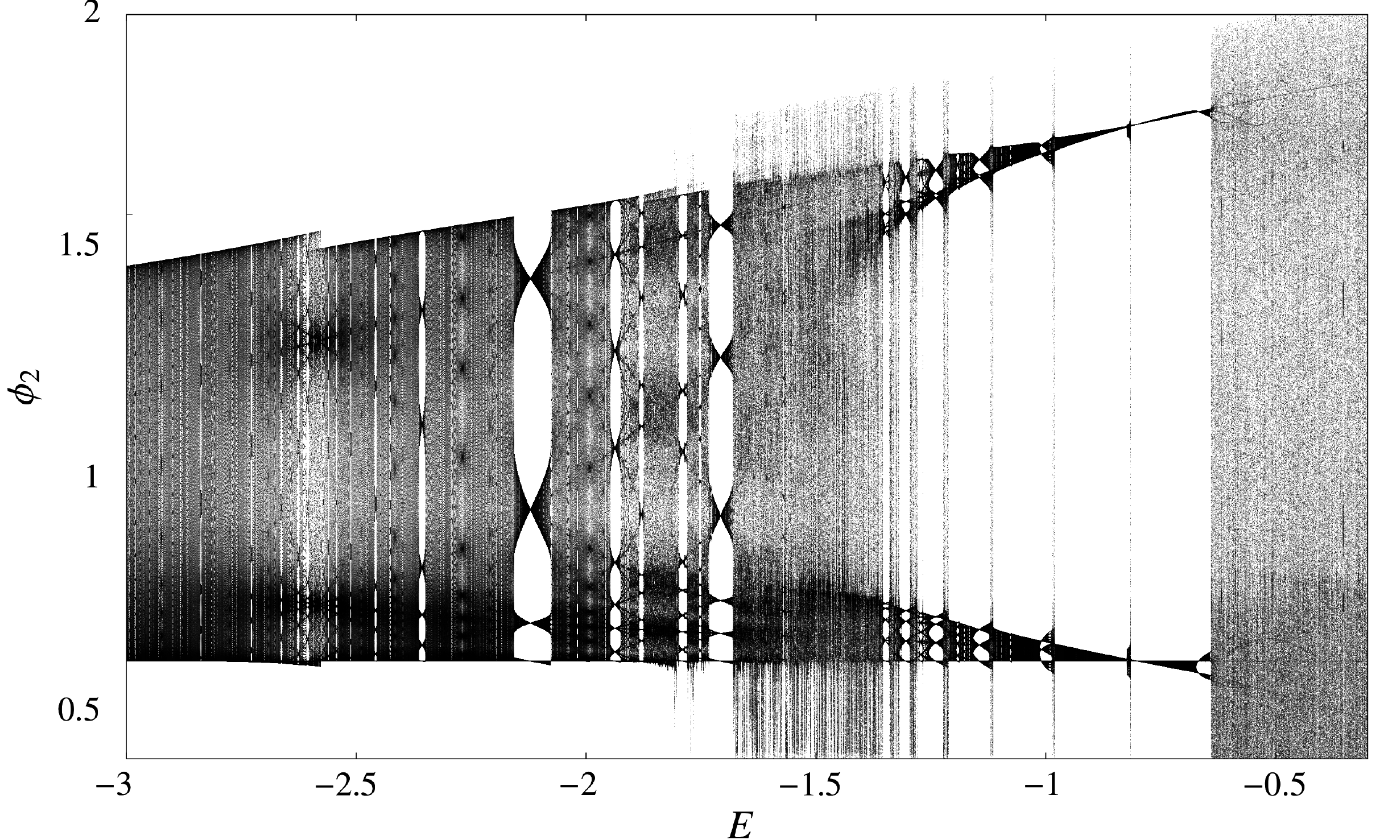}
\caption{The bifurcation diagram with initial conditions $(\phi_2,p_2)=(0.61,0)$.\label{fig:bd_m2_b}}
\end{center}
\end{figure}
\subsection{The non-integrability proof}
Let us formulate the main theorem of this subsection.
\begin{theorem}
The system governed by Hamiltonian~\eqref{eq:m_2:hamiltonian} has no
additional meromorphic first integral for 
\begin{equation}
E\notin \left\{-g(l_1(m_1+m_2)\pm l_2m_2), -gl_1(m_1\pm \rmi
\sqrt{\frac{m_1}{m_2}}m_2)\right\}.
\end{equation}
In other words, it is not integrable in the Liouville sense in the class of
meromorphic functions of coordinates and momenta.
\end{theorem}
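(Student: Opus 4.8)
The plan is to follow the same route as for Model~1---exhibit a non-equilibrium particular solution on an invariant manifold, linearise, extract the normal variational equation, rationalise it, reduce it to the form $w''=r(z)w$, and read off the differential Galois group with the Kovacic algorithm. First I would take the invariant manifold $\scN=\{\phi_1=p_1=0\}$. On $\scN$ the equations~\eqref{eq:m_2:vh} collapse to a plane pendulum in $\phi_2$, namely $\ddot\phi_2=-(g/l_2)\sin\phi_2$ with $\dot\phi_2^2=2[E+gl_1(m_1+m_2)+gl_2m_2\cos\phi_2]/(l_2^2m_2)$, which is \emph{identical} to the particular solution used in the first model; consequently the two real exceptional energies $-g(l_1(m_1+m_2)\pm l_2m_2)$ reappear here for exactly the same reason. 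Linearising along $\vvarphi(t)=(0,\phi_2(t),0,p_2(t))$, the pair $(\Phi_1,P_1)$ again decouples, but now the effective inertia $D=l_1^2m_1+m_2(l_1+l_2\cos\phi_2)^2$ depends on time, so the normal equation carries a first-derivative term, $D\ddot\Phi_1+\dot D\dot\Phi_1+g[l_1(m_1+m_2)+l_2m_2\cos\phi_2]\Phi_1=0$.

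Next I would rationalise with the very substitution $z=\cos\phi_2$ of~\eqref{eq:m_1:rationalization}, giving $\dot z^2=f(z)=-\tfrac{2g}{l_2}(z+\gamma)(z^2-1)$ with the same $\gamma$, and then pass to reduced form through~\eqref{eq:m_1:Tchihandrius}. One finds $p(z)=f'/(2f)+D'/D$ and $q(z)=B/(Df)$ with $B=g[l_1(m_1+m_2)+l_2m_2z]$, so beyond the three second-order poles at $z=\pm1,-\gamma$ inherited from $f$, the coefficient $r$ acquires two further second-order poles at the complex zeros $z_\pm=l_1(-1\pm\rmi\sqrt{m_1/m_2})/l_2$ of the inertia $D$. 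A short computation gives local data $a=-3/16$, $\Delta=1/2$ at $z=\pm1,-\gamma$; $a=-1/4$, $\Delta=0$ at $z_\pm$; and $a_\infty=21/16$, $\Delta_\infty=5/2$ (order $2$) at infinity. Keeping these six points distinct forces $\gamma\neq\pm1$ together with $-\gamma\neq z_\pm$, and the latter confluence is precisely $E=-gl_1(m_1\pm\rmi\sqrt{m_1/m_2}\,m_2)$---which accounts for the two remaining exceptional energies in the statement.

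I would then run the Kovacic algorithm via Lemmas~\ref{lem:m_1_a}--\ref{lem_m_1_b}. The essential new feature compared with Model~1 is that infinity now has \emph{even} order $2$, so the cheap order test no longer rules out reducibility or the finite case. Case~3 is nonetheless disposed of at once: the coincident exponents $1/2$ at each $z_\pm$ produce a logarithm in the Frobenius expansion, so the local monodromy is $-1$ times a non-trivial unipotent and therefore of infinite order, which a finite group cannot contain. Cases~1 and~2, by contrast, each survive the preliminary degree count with a \emph{single} candidate of degree $d=0$: in case~1 the only admissible exponents are $\alpha_\infty=7/4$ with $1/4$ at each of $\pm1,-\gamma$ and the forced $1/2$ at each $z_\pm$; in case~2 the only solution of $d(e)\in\N_0$ is $e_\infty=7$, $e_{\pm1}=e_{-\gamma}=1$, $e_{z_\pm}=2$.

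The hard part will be eliminating these two $d=0$ candidates, for this can no longer be read off from pole orders and instead needs the explicit simple-pole residues $b_i$ of $r$. In case~1 I would form $\omega=\tfrac14\bigl(\tfrac{1}{z-1}+\tfrac{1}{z+1}+\tfrac{1}{z+\gamma}\bigr)+\tfrac12\bigl(\tfrac{1}{z-z_+}+\tfrac{1}{z-z_-}\bigr)$ and test the Riccati identity $\omega'+\omega^2=r$; the double-pole parts match by construction, so the test collapses to the residue conditions $2\alpha_i\sum_{j\neq i}\alpha_j/(z_i-z_j)=b_i$, and it suffices to exhibit one index at which the prescribed $b_i$ disagrees. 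In case~2 the candidate $\theta=\tfrac12\sum_c e_c/(z-c)$ in fact coincides with $p$ itself, and the $d=0$ test reduces to the single identity $\theta''+3\theta\theta'+\theta^3-4r\theta-2r'=0$, which I would likewise show to fail. Both checks are rational in $(\alpha,\beta,\gamma)$ and can be verified symbolically (and confirmed at the numerical values~\eqref{eq:parameters}). Their failure leaves only case~4, so $\scG=\operatorname{SL}(2,\C)$; its identity component is non-Abelian, and Morales--Ramis then yields non-integrability for every energy outside the four exceptional values.
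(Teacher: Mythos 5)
Your proposal is correct in outcome and follows the paper's overall strategy almost step by step: same invariant manifold $\{\phi_1=p_1=0\}$, same pendulum-type particular solution (hence the same two real excluded energies), same normal variational equation with the time-dependent inertia $D$, same rationalisation $z=\cos\phi_2$ and reduction to $w''=r(z)w$, and your local data ($a=-3/16$, $\Delta=1/2$ at $z=\pm1,-\gamma$; $a=-1/4$, $\Delta=0$ at $z_\pm$; $\Delta_\infty=5/2$ with order $2$ at infinity) agrees exactly with the paper's, as does the identification of the complex excluded energies with the confluence $-\gamma=z_\pm$. You also dispose of the finite case the same way, via the forced logarithm at the double exponents. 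The one genuine divergence is the dihedral case: the paper kills it for free by observing that the local monodromy matrix at $z_{3,4}$ is not merely of infinite order but \emph{non-diagonalisable}, and every element of $\scD^\dagger$ (diagonal or anti-diagonal) is diagonalisable, so $\scG$ cannot be conjugate into $\scD^\dagger$; consequently the paper only ever has to run Kovacic's case~1, where the unique $d(e)=0$ candidate $e=(1/4,1/4,1/2,1/2,1/4,7/4)$ with $P=1$ yields the explicit contradiction $z+\alpha(1+\beta)=0$. You instead propose to run Kovacic's case~2 in full, correctly finding the single surviving candidate $e_\infty=7$, $e_{\pm1}=e_{-\gamma}=1$, $e_{z_\pm}=2$ (and correctly noting $\theta=p$), but this leaves you an extra rational-identity verification that the paper's monodromy observation renders unnecessary. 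Your route is valid --- both residual checks are finite, well-specified computations whose asserted failure is guaranteed by the result --- but as written they remain assertions; to make the argument self-contained you should either carry out at least the case-1 residue computation explicitly (as the paper does) or, better, note the diagonalisability obstruction, which eliminates case~2 without any computation at all.
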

\begin{proof}
The system of equations~\eqref{eq:m_2:vh} has two known invariant manifolds defined by
\[
\scN_1=\left\{(\phi_1,\phi_2,p_1,p_2)\in \C^4\ |\ \phi_1=p_1=0\right\},\quad \scN_2=\left\{(\phi_1,\phi_2,p_1,p_2)\in \C^4\ |\ \phi_2=p_2=0\right\}.
\]
Indeed, restricting the equations~\eqref{eq:m_2:vh} to $\scN_1$, we obtain
\begin{equation}
\label{eq:m_2:vh0}
\dot \phi_1=0,\quad \dot \phi_2=\frac{p_2}{l_2^2m_2},\quad \dot p_1=0,\quad \dot p_2=-gl_2m_2\sin\phi_2.
\end{equation}
Thus, we have a one-parameter family of particular solutions lying on $H=E$
\begin{equation}
\label{eq:m_2:family}
\dot \phi_2^2=\frac{2 \left[E+g l_2 m_2 \cos \phi
   _2+g l_1
   \left(m_1+m_2\right)\right]}{l_2^2 m_2}.
\end{equation}
If we denote by $[\Phi_1,\Phi_2,P_1,P_2]^T$ the variations of
$[\phi_1,\phi_2,p_1,p_2]^T$, then the variational equations along this
particular solutions are the following
\begin{equation}
\begin{pmatrix}
\dot \Phi_1\\
\dot \Phi_2\\
\dot P_1\\
\dot P_2
\end{pmatrix}=
\begin{pmatrix}
0&0&a_{13}&0\\
0&0&0&a_{24}\\
a_{31}&0&0&0\\
0&a_{42}&0&0 \\
\end{pmatrix}
\begin{pmatrix}
\Phi_2\\ \Phi_2 \\ P_1 \\ P_2
\end{pmatrix},
\end{equation}
with
\begin{equation}
\begin{aligned}
a_{13}&=\frac{1}{l_1^2m_1+m_2(l_1+l_2\cos\phi_2)^2}, & a_{24} &=\frac{1}{l_2^2 m_2},\\
a_{31}&=-g(l_1(m_1+m_2)+l_2m_2\cos\phi_2, & a_{42} &= -gl_2m_2\cos\phi_2.
\end{aligned}
\end{equation}
The system for $\dot \Phi_1$, and $P_1$ forms normal variational equations
that can be rewritten  as a one second-order differential equation
\begin{equation}
\label{eq:m_2:variational}
\ddot \Phi+a\dot \Phi+b\Phi=0,\qquad \Phi\equiv \Phi_1,
\end{equation}
with coefficients
\[
a=-\frac{\dot a_{13}}{a_{13}}=-\frac{2(l_1+l_2\cos\phi_2)p_2\sin\phi_2}{l_2(l_1^2m_1+m_2(l_1+l_2\cos\phi_2)^2},\quad b=-a_{13}a_{31}=\frac{g[l_1(m_1+m_2)+l_2m_2\cos\phi_2]}{l_1^2m_1+m_2(l_1+l_2\cos\phi_2)^2}.
\]
In order to transform this equation into one with rational coefficients we
make the following change of the independent variable
\begin{equation}
t\longrightarrow z=\cos\phi_2(t).
\end{equation}
After this transformation the variational equation~\eqref{eq:m_2:variational}
converts into
\begin{equation}
\label{eq:m_2:rational}
\Phi''+p(z)\phi'+q(z)\Phi=0,\quad '\equiv \Dz,
\end{equation}
with coefficients
\[
    p=\frac{z}{z^2-1}+\frac{2(z+\alpha)}{(z+\alpha)^2+\alpha^2\beta}+\frac{1}{2(z+\gamma)},\quad q=-\frac{\alpha  \beta +\alpha +z}{2
   \left(z^2-1\right) (\gamma +z) \left(\alpha ^2
   \beta +(\alpha +z)^2\right)},
\]
where 
\begin{equation}
\label{eq:m_2:parameters}
\alpha=\frac{l_1}{l_2},\qquad \beta=\frac{m_1}{m_2},\qquad
\gamma=\frac{E+gl_1(m_1+m_2)}{gl_2m_2},
\end{equation}
have been introduced. Now we apply the standard change of dependent variable
\begin{equation}
\label{eq:m_2:Tchihandrius}
\Phi=w\exp\left[-\frac{1}{2}\int_{z_0}^zp(s)\rmd s\right],
\end{equation}
which transforms~\eqref{eq:m_1:rational} into its reduced form
\begin{equation}
\label{eq:m_2:normal}
w''=r(z)w,\qquad r(z)=-q(z)+\frac{1}{2}p'(z)+\frac{1}{4}p(z)^2,
\end{equation}
where the coefficient $r(z)$ is given by
\begin{equation}
\label{eq:rr}
\begin{split}
r(z)&=-\frac{3}{16} \left(\frac{1}{(
   z+\gamma)^2}+\frac{1}{(z+1)^2}+\frac{1}{(z-1)^2}\right
   )+\frac{\alpha^2\beta}{(z+\alpha-\rmi\alpha\sqrt{\beta})^2(z+\alpha+\rmi\alpha\sqrt{\beta})^2}\\
   & +\frac{\alpha  (\alpha  (\beta +1) \gamma +4 \beta
   )+15 z^3+9 z^2 (2 \alpha +\gamma )+\alpha  z (3
   \alpha  (\beta +1)+10 \gamma )}{8
   \left(z^2-1\right) (z+\gamma ) \left(z+\alpha-\rmi \alpha 
   \sqrt{\beta } \right) \left(z+\alpha+\rmi \alpha 
   \sqrt{\beta }\right)}.
   \end{split}
\end{equation}
Equation~\eqref{eq:m_2:normal} has six regular singular points
\begin{equation}
z_{1,2}=\pm 1,\quad z_{3,4}=-\alpha\pm \rmi\alpha\sqrt{\beta},\quad z_5=-\gamma,\quad z_{6}=\infty,
\end{equation}
where the singularities $z_1,\dots,z_5$ are the poles of the second order, and
degree of infinity is $2$.  The respective differences of
exponents at singularities are following
\begin{equation}
\Delta_1=\Delta_2=\frac{1}{2},\quad \Delta_3=\Delta_4=0,\quad \Delta_5=\frac{1}{2},\quad \Delta_6=\frac{5}{2}.
\end{equation}
In  order to avoid their confluences  the conditions
\begin{equation}
\label{eq:m_2:param}
\gamma\notin\{\pm 1,\alpha\pm\rmi\alpha\sqrt{\beta}\},\quad \beta\neq-\frac{(\alpha\pm1)^2}{\alpha^2}
\end{equation}
must be satisfied. Since $\beta$ and $\alpha$ are real positive parameters only the
first condition remains valid. From this, we have the following exclusions for
the energy
\begin{equation}
E\notin \left\{-g(l_1(m_1+m_2)\pm l_2m_2), -gl_1(m_1\pm \rmi \sqrt{\frac{m_1}{m_2}}m_2)\right\}.
\end{equation}
Now, we can prove the following 
\begin{lemma}
The differential Galois group of equation~\eqref{eq:m_2:normal} satisfying
\eqref{eq:m_2:param} is $\operatorname{SL}(2,\C)$.
\end{lemma}
\begin{proof}
As $\Delta_{3,4}=0$, local solutions in a vicinity of  $z_*=z_3$, and
$z_*=z_4$, contain a logarithmic term. Two linearly independent solutions $w_1$
and $w_2$ of~\eqref{eq:m_2:normal} have the following forms
\begin{equation}
\label{eq:m_2:solutions}
w_1(z)=(z-z_*)^\rho f(z),\qquad w_z(z)=w_1(z)\ln(z-z_*)+(z-z_*)^\rho h(z),
\end{equation}
where $f(z)$ and $h(z)$ are holomorphic at $z_*$, and  $f(z_*)\neq 0$.  The
local monodromy matrix
corresponding to continuation of the matrix of fundamental solutions along a small loop
encircling $z_*$ counterclock-wise has the following form
\[
\vM_*=\begin{pmatrix}
-1& -2\pi\rmi \\
0& -1\\
\end{pmatrix},
\]
for details consult~\cite{Maciejewski:02::}. A subgroup of
$\operatorname{SL}(2, \C)$ generated by a non-diagonal triangular matrix cannot
be finite, and thus also differential Galois group cannot be finite. Moreover,
this matrix is non-diagonalisable. Thus also differential Galois group G of
this equation cannot be a subgroup of dihedral group that correspond to the
second case of the Kovacic algorithm.  Thus the only possibilities are that
$\scG$ is the full triangular group or $\operatorname{SL}(2,\ C)$.

In order to check the first possibility we apply the first case of the Kovacic
algorithm. If it is satisfied, then equation~\eqref{eq:m_2:normal} has an
exponential solution. First, for singular points $z_i$ we calculate  sets of
exponents $E_i$ of local solutions
\[
E_i:=\left\{(1\pm\Delta_i)/2\right\}, \qquad \text{for}\quad i=1,\dots, 6.
\]
 Thus, we have
 \begin{equation}
 \label{eq:m_2:auxiliary_sets}
 E_1=E_2=\left\{\frac{3}{4},\frac{1}{4}\right\},\quad 
 E_3=E_4=\left\{\frac{1}{2},\frac{1}{2}\right\},\quad 
 E_5=\left\{\frac{3}{4},\frac{1}{4}\right\},\quad 
 E_6=\left\{\frac{7}{4},-\frac{3}{4}\right\}.
 \end{equation}
 Next, according to the algorithm we look for elements 
 $e=(e_1,e_2,e_3,e_4,e_5,e_6)$ of Cartesian product $E=\prod_{i=1}^6E_i$ such that
 \begin{equation}
 d(e):=e_6-\sum_{i=1}^5e_i  \in \N_0.
 \end{equation}
 In our case there exists only one element of $E$ satisfying this condition, namely
 \[
 e=\left\{\frac{1}{4},\frac{1}{4},\frac{1}{2},\frac{1}{2},\frac{1}{4},\frac{7}{4}\right\},
 \]
 for which $d(e)=0$. Now we pass to the third step of the Kovacic algorithm. We
 look for a polynomial $P\neq 0$ of degree $d(e)$, such that it is a solution
 of the following differential equation
 \begin{equation}
 \label{eq:m_2:poly}
 P''+2\omega P'+ (\omega'+\omega^2-r)P=0,
 \end{equation}
 where 
 \[
 \omega=\sum_{i=1}^5\frac{e_i}{z-z_i},
 \]
 and $r$ is given in~\eqref{eq:rr}. In the considered case, we have $P=1$, so
 equation~\eqref{eq:m_2:poly} gives equality
 \[
 z+\alpha(1+\beta)=0,
 \]
 which cannot be satisfied for arbitrary $z$.
\renewcommand{\qedsymbol}{$\blacksquare$}
\end{proof}
\end{proof}

\section{Conclusions}

Although the obstructions to integrability obtained with the Morales--Ramis theory
are one of the strongest known, the frequent obstacle in its application is finding a
particular solution for a given dynamical system. The classical double pendulum,
both three- and two-dimensional, has no obvious solutions which could be used
to explicitly linearize the equations of motion and allow for determination of the
differential Galois group. Thus, despite the numerical evidence
\cite{Stachowiak:06::,Ivanov:99::} and theoretical
\cite{Burov:02::,Dullin:94::} work, a proof of
Liouvillian non-integrability still eludes us.

The present work is a step towards proving the conjecture that the
three-dimensional double pendulum has no additional first
integrals. Otherwise one would expect them to be present also in some
restrictions of the original system. Note also, that the finite set of energies
for which our non-integrability result does not hold does not leave any hope
for practical solving of the equations of motion, as that requires a global
first integral, independent of energy.

A peculiar feature of both models, which allows for such a concise analysis, is
the fact that the characteristic exponents of the variational equations do not
depend on the physical parameters. Since the differential Galois group depends
on the exponents, this is a huge simplification. Usually a mechanical system admits
first integrals for special values of parameters because the group depends on
them essentially. In this sense, the models considered here are quite exceptional.

\section*{Acknowledgement}
The work has been supported by grant No. DEC-2013/09/B/ST1/04130 
of National Science Centre of Poland.

\end{document}